\documentclass[10pt,twocolumn,twoside]{IEEEtran}
\IEEEoverridecommandlockouts

\usepackage{cite}
\usepackage{amsmath,amssymb,amsfonts}
\usepackage{graphicx}
\usepackage{textcomp}
\usepackage{balance}
\usepackage{xcolor}
\usepackage[utf8]{inputenc}

\usepackage{enumitem}
\usepackage{amsthm}
\usepackage{accents}
\newcommand{\ubar}[1]{\underaccent{\bar}{#1}}
\newlist{abbrv}{itemize}{1}
\setlist[abbrv,1]{label=,labelwidth=0.5in,align=parleft,itemsep=0.1\baselineskip,leftmargin=!}
\usepackage{float}
\floatstyle{ruled}
\newfloat{model}{thp}{lop}
\floatname{model}{Model}

\newtheorem{theorem}{Theorem}

\usepackage{todonotes}

\usepackage{algorithm}
\usepackage{algorithmicx}
\usepackage{algpseudocode}

\def\BibTeX{{\rm B\kern-.05em{\sc i\kern-.025em b}\kern-.08em
    T\kern-.1667em\lower.7ex\hbox{E}\kern-.125emX}}
\usepackage{lipsum}

\title{\LARGE \bf Trust-Region Benders for Scalable Multiperiod Expansion Planning for Transmission and Storage}

\author{Kevin Wu$^\dagger$, Rabab Haider$^\ddagger$, Pascal Van Hentenryck$^\dagger$ 
\vspace{-0.5cm}
 \thanks{$^\dagger$Kevin Wu and Pascal Van Hentenryck are with the H. Milton Stewart School of Industrial and Systems Engineering, Georgia Institute of Technology, Atlanta, GA, United States. 
Email: \{kwu381, pvh\}@gatech.edu}
\thanks{$^\ddagger$Rabab Haider is with the Civil and Environmental Engineering, University of Michigan, Ann Arbor, MI, United States. 
Email: rababh@umich.edu}
}

\begin{document}
\begingroup
\allowdisplaybreaks

\maketitle
\thispagestyle{plain}

\begin{abstract}
Transmission expansion planning (TEP) seeks to identify cost-effective investment strategies that enable power systems to reliably meet future demand, while accounting for the location, timing, and operational impacts of infrastructure upgrades. As demand grows and renewable penetration increases, there is growing interest in jointly modeling energy storage investments to enhance operational flexibility. This work presents a multiperiod formulation that co-optimizes transmission reconductoring and energy storage installation across multiple planning periods. To solve the resulting large-scale problem efficiently, a Benders decomposition with per-period trust-region stabilization and a relaxation-based warm-start procedure is proposed. The approach is demonstrated on a 2,000-bus Texas system, producing an investment plan spanning 2030–2045 with optimality gap below 1.0\%, and is compared against a rolling-horizon approach to highlight the benefits of multiperiod coordination.
\end{abstract}

\begin{IEEEkeywords}
transmission expansion planning, storage plan-
ning, multiperiod, PTDF, co-optimization
\end{IEEEkeywords}


\newcommand{\db}{\delta B}
\newcommand{\dbhi}{\bar{\delta B}}
\newcommand{\dblo}{\ubar{\delta B}}
\newcommand{\dva}{{\Delta \theta}}
\newcommand{\dvahi}{\bar{\Delta \theta}}
\newcommand{\dvalo}{\ubar{\Delta \theta}}

\newcommand{\LB}{\mathrm{LB}}
\newcommand{\UB}{\mathrm{UB}}
\newcommand{\MP}{\mathrm{MP}}
\newcommand{\SP}{\mathrm{SP}}
\newcommand{\op}{\mathrm{op}}
\newcommand{\rad}{\mathrm{rad}}

\section*{Nomenclature}
\label{sec:nomenclature}
\paragraph{Sets}

\newcommand{\R}{\mathcal{R}}
\newcommand{\N}{\mathcal{N}}
\newcommand{\T}{\mathcal{T}}
\newcommand{\E}{\mathcal{E}}
\newcommand{\Y}{\mathcal{Y}}

\newcommand{\pd}{\mathbf{p}^{\text{d}}}
\newcommand{\pgmin}{\mathbf{\ubar{p}}^{\text{g}}}
\newcommand{\pgmax}{\mathbf{\bar{p}}^{\text{g}}}
\newcommand{\pfmax}{\mathbf{\bar{p}}^{\text{f}}}
\newcommand{\dvamin}{\ubar{\theta}}
\newcommand{\dvamax}{\bar{\theta}}

\newcommand{\df}{\delta \pf}
\newcommand{\dX}{\delta X}
\newcommand{\dXmin}{\ubar{\delta X}}
\newcommand{\dXmax}{\bar{\delta X}}
\newcommand{\dB}{\delta B}
\newcommand{\dBmin}{\ubar{\delta B}}
\newcommand{\dBmax}{\bar{\delta B}}

\newcommand{\pg}{\mathbf{p}^{\text{g}}}
\newcommand{\pf}{\mathbf{p}^{\text{f}}}
\newcommand{\va}{\theta}

\begin{abbrv}
    \item[$\N$] Set of buses; $\N = \{1, ..., N\}$; indexed by $i$
    \item[$\E$] Set of branches; $\E = \{1, ..., E\}$; indexed by $ij$
    \item[$\R$] Set of days; $\R = \{1, ..., R\}$; indexed by $r$
    \item[$\Y$] Set of periods; $\Y = \{1, ..., Y\}$; indexed by $y$
    \item[$\T$] Set of hours; $\T = \{1, ..., T\}$; indexed by $t$
\end{abbrv}
\noindent{Unless otherwise specified, all parameters and operational variables are indexed by $(y,r,t)$, denoting planning period $y$, representative day $r$, and hour $t$.}
\vspace{\baselineskip} 

\paragraph{Parameters}

\begin{abbrv}
    \item[$\pd_{i,y,r,t}$] Load at node $i$
    \item[$\pgmin_{i,y,r,t}$] min output of generator $i$
    \item[$\pgmax_{i,y,r,t}$] max output of generator $i$
    \item[$c^{\text{rec}}_{ij}$] Annualized reconductoring cost of edge $ij$
    \item[$c^{\text{stor}}_{i}$] Annualized storage installation cost at node $i$
    \item[$c^{\text{G}}_{i}(\cdot)$] Production cost function of generator $i$
    \item[$c^{\text{S}}_{i}(\cdot)$] Storage charge/discharge cost of node $i$
    \item[$\Omega_{\R}(\cdot)$] Probability mass function for set of days $\R$
    \item[$\lambda$] Load shedding penalty cost
    \item[$\Delta^c_{ij}$] Reconductoring capacity increment of branch $ij$
    \item[$\pfmax_{ij}$] Thermal limit of branch $ij$
    \item[$\alpha$] Interest rate
\end{abbrv}

\paragraph{Variables}

\begin{abbrv}
    \item[$\gamma_{ij, y}$] Reconductoring of branch $ij$ by year $y$
    \item[$\sigma_{i, y}$] Installed storage energy rating at node $i$ by year $y$
    \item[$\text{soc}_{i,y,r,t}$] Storage SoC at node $i$ 
    \item[$\text{ch}_{i,y,r,t}$] Storage charge at node $i$ 
    \item[$\text{dis}_{i,y,r,t}$] Storage discharge at node $i$ 
    \item[$\pg_{i,y,r,t}$] Output of generator $i$ 
    \item[$\pf_{ij,y,r,t}$] Power flow on branch $ij$ 
    \item[$\xi_{i,y,r,t}$] Load shed at node $i$ 
\end{abbrv}

\section{Introduction}
\label{sec:introduction}
Power grids in the U.S. are rapidly decarbonizing to address climate challenges while expanding to meet increasing demand driven by electrification, manufacturing, and computing. Infrastructure expansion is required to ensure the reliable delivery of power to load centers and sufficient operational flexibility to accommodate diverse load patterns and intermittent renewable generation, posing unprecedented challenges for system planners tasked with solving transmission expansion planning (TEP) problems.

TEP is typically formulated as a mixed-integer program (MIP) and is computationally challenging to solve due to discrete investment decisions and the need to capture high spatial, temporal, and operational resolutions. Given the need for operational flexibility \cite{levin2023energy_nature_review}, there is growing interest in TEP formulations that also model storage investment and operation. The TEP+Storage model jointly considers multiple planning periods and coupled hourly operational decisions, where the location, timing, and operation of investments must be co-optimized, dramatically increasing problem scale. In this paper, the term \textit{multiperiod} refers to planning decisions across multiple investment planning periods.

TEP+Storage formulations differ with respect to investment timing, network representation, temporal resolution, and decomposition strategy. The discussion below focuses on the subset of studies most closely related to the present work, namely multiperiod planning formulations with explicit investment timing decisions and high-resolution network representations.

Existing multiperiod transmission and capacity expansion planning models are often evaluated on relatively low-dimensional test systems (e.g., $<30$ regions). For example, \cite{ramirez2020robust_multiperiod_benders} introduces a multiperiod Benders decomposition framework but does not explicitly model storage, while \cite{heuberger2017power_tech_cost_learning} and \cite{goke2024stabilized} incorporate storage investment but consider a single-node system and a four-region system, respectively. Similarly, \cite{pecci2025regularized} propose and evaluate a regularized Benders approach for multiperiod capacity expansion planning problem on a 26-zone contiguous US model. These studies focus on algorithmic development but do not test the viability of these approaches to large-scale networks. The aggregated spatial representations cannot capture high-resolution network effects such as congestion patterns and storage–transmission interactions.

Several recent studies addressing high spatial resolution TEP+Storage employ decomposition techniques such as Benders decomposition or progressive hedging, but are limited to a single planning period \cite{piansky2025optimizing_cats_wildfire_benders, zuluaga2024cats_progressive_hedging_ph}, or rely on myopic rolling-horizon approximations for multiperiod investment planning \cite{wu2025high}. In contrast, large-scale multiperiod expansion models typically adopt aggregated or transport-based transmission representations that relax network physics \cite{ReEDS}. While such approaches improve tractability, they impact decisions on siting, costs, and operations. Models with low spatial resolution lack the locational accuracy to sufficiently inform decision making \cite{Jacobson, XuHobbs}.

This work addresses these challenges by extending the trust-region Benders (TR-Benders) decomposition framework of Wu et al. \cite{wu2025high} to a multiperiod transmission and energy storage expansion planning problem with intertemporal investment coordination. The proposed formulation jointly optimizes transmission and storage investments across multiple planning periods while preserving DC power flow at hourly operational resolution. To efficiently solve the resulting problem, the TR-Benders framework is extended to leverage the additional decomposition structure introduced by the multiperiod formulation, enabling parallel solution across both planning periods and representative operating days. The approach further incorporates per-period trust-region constraints and a relaxation-based warm-start to improve convergence. The methodology is demonstrated on a high-resolution 2000-bus Texas grid model, producing an investment plan for 2030-2045 with hourly operational decisions and a final optimality gap $<1.0\%$.

\section{Formulation}
\label{sec:formulation}
The all-years multiperiod (AY) TEP+Storage model is formulated as a two-stage optimization program. The first stage determines the timing and location of investment decisions, including transmission upgrades and storage installations, to minimize capital expenditures $c^{\mathrm{cap}}(\cdot)$. The second stage resolves hourly operational decisions to minimize operational costs $c^{\mathrm{op}}(\cdot)$. The formulation extends the single-period planning framework \cite{wu2025high} to a multiperiod setting in order to capture the delayed value of investments, and is summarized in Model~\ref{model:TNEP_ptdf}.


\subsection{Investment Decisions and Variables}
Transmission upgrades are modeled as reconductoring of pre-existing transmission lines, which has been shown to be frequently favored over new line construction due to its cost-effectiveness \cite{chojkiewicz2024accelerating_reconductoring_pnas}. The cumulative upgrade of branch $ij$ in period $y\in \Y$ is denoted by binary variable $\gamma_{ij, y}$, which equals $1$ if the branch is reconductored, and $0$ otherwise. Each upgrade corresponds to a $\Delta^c_{ij}$ MW increase in capacity. 

Energy storage is modeled as short-duration lithium-ion battery technology, given its technological maturity and widespread adoption in power systems \cite{NRELBatteryCost}. Cumulative storage investment at node $i$ in period $y$ is denoted by decision variable $\sigma_{i, y}$, representing the installed energy capacity (MWh). Batteries are modeled as four-hour duration, with maximum charge/discharge power of $\sigma_i/4$. 

For notational convenience, let $\gamma_y$ denote the vector of transmission upgrade decisions across all branches in period $y$, and $\sigma_y$ denote the vector of storage investment decisions across all nodes in period $y$. Let $\gamma := (\gamma_y)_{y\in\Y}$ and $\sigma := (\sigma_y)_{y\in\Y}$ denote the concatenation of these decisions across all planning periods. The full vector of investment decisions is denoted by $x := (\gamma, \sigma)$, and $x_y := (\gamma_y, \sigma_y)$ denotes the corresponding investment decisions in period $y$.

\subsection{Objective}
The objective of the AY TEP+Storage problem is to minimize the sum of capital and operational expenditures across all planning periods. The delayed value of investments is modeled using an annual discount factor $d_y := (1+\alpha)^{-p(y)}$, where $\alpha$ is the annual interest rate and $p(y)$ denotes the number of years elapsed since the reference year for the start of planning period $y$, with $p(1)=0$. For example, with 5-year planning periods, $p(2)=5$, $p(3)=10$, and so on. Capital investment costs for line reconductoring, $c_{ij}^\text{rec}$, and storage, $c_i^\text{stor}$, are expressed as annualized costs. The resulting capital expenditure is a linear function of the cumulative line reconductoring decisions $\gamma$ and storage investment decisions $\sigma$ made by each planning period $y$. Both investment variables are initialized to zero in the baseline period $y=0$. 
\begin{align}
\label{eq:TNEP:capex}
c^{\mathrm{cap}}(\gamma,\sigma)
&:= \sum_{y \in \Y} d_y \Bigg[
      \sum_{ij \in \E} c_{ij}^{\mathrm{rec}}
      \bigl(\gamma_{ij,y}-\gamma_{ij,y-1}\bigr)
      \notag \\
&\qquad\qquad
    + \sum_{i \in \N} c_i^{\mathrm{stor}}
      \bigl(\sigma_{i,y}-\sigma_{i,y-1}\bigr)
\Bigg].
\end{align}
Given planning decisions $x_y := (\gamma_y, \sigma_y)$ for period $y$, the operational expenditure incurred in planning period $y$ on representative day $r\in \R$ is denoted by $c^{op}_{y,r}(\cdot)$. The operating cost includes generation costs, storage charging and discharging costs, and a penalty on load-shed slack variables $\xi$.

\begin{align*}
c^{\mathrm{op}}_{y,r}(x_y)
&:= \sum_{i \in \N} \sum_{t \in \T} \Big[
      c_i^{\mathrm{G}}\!\left(p^g_{i,y,r,t}\right)
      \notag \\
&\qquad\qquad\quad
      + c_i^{\mathrm{S}}\!\left(\mathrm{ch}_{i,y,r,t}
      + \mathrm{dis}_{i,y,r,t}\right)
      + \lambda\, \xi_{i,y,r,t}
    \Big].
\end{align*}
The operational costs are also annualized to maintain consistent units with the annualized investment costs. Specifically, the operational cost for planning period $y$ is computed as the probability-weighted sum over representative days and scaled by 365 days per year:
\begin{align*}
    c^{\mathrm{op}}_{y}(x_y) := \textstyle{ 365 \sum_{r \in \R}} \;\Omega_{\R}(r) \; c^{\mathrm{op}}_{y,r}(x_y)
\end{align*}
The aggregate operational cost in the AY formulation is then obtained by summing across all planning periods, given the full vector of investments across all periods $x$.
\begin{align}
    \label{eq:TNEP:opex_all}
    c^{\mathrm{op}}(x) := \textstyle{\sum_{y \in \Y}} \; d_y \; c^{\mathrm{op}}_{y}(x_y)
\end{align}

\subsection{Constraints}
A DC power flow is used, via the Power Transfer Distribution Factor (PTDF) matrix $\Phi$, which maps net nodal injections to transmission line flows $\pf$. For each planning period $y$, representative day $r$, and hour $t$, the vector of net nodal injections is
\begin{align*}
\text{inj}^{y,r,t}
&:= \bigl[
      \pg_{i,y,r,t}
      + \text{dis}_{i,y,r,t}
      \notag \\
&\qquad
      - \pd_{i,y,r,t}
      - \text{ch}_{i,y,r,t}
      + \xi_{i,y,r,t}
    \bigr]^{N}_{i=1}
\end{align*}
The operational constraints are as follows:
\begin{align}
    \label{eq:TNEP:ptdf_flow}
    [\pf_{ij,y,r,t}]^{E}_{ij=1} &= \Phi \times \text{inj}^{y,r,t} \\
    \label{eq:TNEP:global_power_balance}
    \mathbf{1}^\top \text{inj}^{y,r,t} &= 0 \\
    \label{eq:TNEP:thermal_limit}
    -\pfmax_{ij} - \gamma_{ij,y} \Delta^c_{ij} & \leq \pf_{ij,y,r,t} \leq \pfmax_{ij} +\gamma_{ij,y} \Delta^c_{ij} \\
    \label{eq:TNEP:generation:min_max_limits}
    \pgmin_{i,y,r,t} & \leq \pg_{i,y,r,t} \leq \pgmax_{i,y,r,t} \\
    \label{eq:TNEP:state_of_charge}
    \text{soc}_{i,y,r,t} &= \text{soc}_{i,y,r,t-1} + \text{ch}_{i,y,r,t}\eta - \text{dis}_{i,y,r,t}/\eta \\
    \label{eq:TNEP:initial_charge}
    \text{soc}_{i,y,r,1} &= \text{ch}_{i,y,r,1}\eta - \text{dis}_{i,y,r,1}/\eta \\
    \label{eq:TNEP:final_charge}
    \text{soc}_{i,y,r,T} &= 0 \\
    \label{eq:TNEP:soc_energy_rating}
    0 & \leq \text{soc}_{i,y,r,t} \leq \sigma_{i,y} \\
    \label{eq:TNEP:charge_limits}
    0 & \leq \text{ch}_{i,y,r,t} \leq \sigma_{i,y}/4 \\ 
    \label{eq:TNEP:discharge_limits}
    0 & \leq \text{dis}_{i,y,r,t} \leq \sigma_{i,y}/4
\end{align}

Constraint \eqref{eq:TNEP:ptdf_flow} determines power flows from nodal injections, while \eqref{eq:TNEP:global_power_balance} enforces system-wide power balance. Thermal limits on line flows are imposed by \eqref{eq:TNEP:thermal_limit}. Generator dispatch is constrained by \eqref{eq:TNEP:generation:min_max_limits}, where renewable units are modeled with zero minimum output and scenario-dependent upper bounds to allow curtailment, while conventional generators are dispatchable within their operating limits. Storage operation is captured by \eqref{eq:TNEP:state_of_charge}–\eqref{eq:TNEP:discharge_limits}, which describe SoC dynamics with efficiency losses, zero SoC conditions at the beginning and end of each representative day, and charging and discharging limits. Consequently, storage operation is constrained within individual representative days, preserving inter-day independence but limiting the model's ability to capture inter-day energy shifting. Finally, to reflect the irreversibility of transmission and storage investments, monotonicity constraints are enforced:
Constraint \eqref{eq:TNEP:ptdf_flow} determines power flows from nodal injections, while \eqref{eq:TNEP:global_power_balance} enforces system-wide power balance. Thermal limits on line flows are imposed by \eqref{eq:TNEP:thermal_limit}. Generator dispatch is constrained by \eqref{eq:TNEP:generation:min_max_limits}, where renewable units are modeled with zero minimum output and scenario-dependent upper bounds to allow curtailment, while conventional generators are dispatchable within their operating limits. Storage operation is captured by \eqref{eq:TNEP:state_of_charge}–\eqref{eq:TNEP:discharge_limits}, which describe SoC dynamics with efficiency losses, zero SoC conditions at the beginning and end of each representative day, and charging and discharging limits. Consequently, storage operation is constrained within individual representative days, preserving inter-day independence but limiting the model's ability to capture inter-day energy shifting. Finally, to reflect the irreversibility of transmission and storage investments, monotonicity constraints are enforced:
\begin{align}
    \label{eq:TNEP:monotonicity_lines_storage}
    x_{y-1} & \preceq x_{y}
\end{align}

\begin{model}[!t]
    \caption{Multiperiod TEP+Storage Formulation}
    \label{model:TNEP_ptdf}
    \begin{align*}
        \text{min} \quad 
            & \eqref{eq:TNEP:capex} + \eqref{eq:TNEP:opex_all}
            \\
        \text{s.t.} \quad 
        & \eqref{eq:TNEP:ptdf_flow}-\eqref{eq:TNEP:global_power_balance} , \quad \forall y \in \Y, r \in \R, t \in \T \\
        & \eqref{eq:TNEP:thermal_limit},
            \quad \forall ij \in \E, y \in \Y, r \in \R, t \in \T \\ 
        & \eqref{eq:TNEP:generation:min_max_limits},
            \quad \forall i \in \N, y \in \Y, r \in \R, t \in \T \\
        & \eqref{eq:TNEP:state_of_charge}
            \quad \forall i \in \N, y \in \Y, r \in \R, t \in \{2,...,T\} \\
        & \eqref{eq:TNEP:initial_charge}-\eqref{eq:TNEP:final_charge},
            \quad \forall i \in \N, y \in \Y, r \in \R \\
        & \eqref{eq:TNEP:soc_energy_rating}-\eqref{eq:TNEP:discharge_limits},
        \quad \forall i \in \N, y \in \Y, r \in \R, t \in \T \\
        & \eqref{eq:TNEP:monotonicity_lines_storage}, \quad y \in \Y
    \end{align*}
\end{model}

\noindent \emph{Remark:} Reconductoring increases thermal limits while leaving line electrical parameters largely unchanged \cite{Henriksen2013ConductorPerformance}, permitting the use of the PTDF-based power flow model which assumes fixed network topology and electrical parameters.

The size of Model \ref{model:TNEP_ptdf} grows linearly with the number of planning periods and representative days ($Y \times R$), necessitating scalable solution strategies for large instances.

\section{Methodology}
\label{sec:methodology}
To solve the multiperiod formulation presented in Model \ref{model:TNEP_ptdf}, a multicut Benders decomposition (BD) framework is employed. In a baseline multicut BD, Model \ref{model:TNEP_ptdf} is decomposed into a master problem (MP) that determines first-stage investment decisions and a collection of subproblems (SPs) that evaluate operational decisions under fixed investments. At each iteration, dual information from each SP is used to generate a corresponding cut, which is added to the MP until convergence. However, baseline implementations struggle to converge, with significant oscillations in MP iterates and dual degeneracy across investment configurations. 

To address these computational challenges, this work develops a trust-region (TR) stabilization and relaxation-based warm-start strategy for BD applied to grid planning problems. While these techniques were previously applied in either single-period expansion models \cite{wu2025high} or simplified multiperiod settings \cite{goke2024stabilized}, the present work extends them to a multiperiod all-years formulation by introducing \textbf{per-period trust-region constraints} on investment decisions. The multiperiod all-years (AY) TR-Benders procedure is described next, and presented in detail in Appendix~\ref{app:algorithm} (Algorithm \ref{alg:tr_benders}).


\subsection{AY TR-Benders Master Problem (MP)}
\label{subsec:master}
The MP determines transmission reconductoring and storage investment decisions for each planning period. Its objective includes auxiliary variables $\theta_{y,r}$ that provide lower bounds on the operational cost $c^{\text{op}}_{y,r}(\cdot)$ for each representative day and period. In addition, a proxy variable $\rho$ underestimates system-wide load shedding across all periods and is constrained to zero via \eqref{eq:TNEP:no_load_shed}.
At each iteration, $
\theta$ and $\rho$ are refined through Benders cuts and accumulated via \eqref{eq:TNEP:accumulated_cuts}. The MP at iteration $k$ is given in Model \ref{model:TNEP_ptdf_master}. \\

\noindent \emph{Remark:} Both Models \ref{model:TNEP_ptdf} and \ref{model:TNEP_ptdf_master} include load-shed slack variables $\xi$ to ensure recourse feasibility. The two-stage formulation relies on a penalty term to discourage load shedding, whereas the Benders formulation instead uses the master proxy variable $\rho$ to enforce zero load shedding in the final solution, avoiding the need to tune the penalty parameter as the number of representative days increases.

\begin{model}[!t]
    \caption{Benders Master Problem (MP)}
    \label{model:TNEP_ptdf_master}
    \begin{align}
        z^{\text{MP}}_k = \text{min} \quad 
            & \eqref{eq:TNEP:capex} + \textstyle\sum_{y \in \Y, r\in \R}\; d_y \Omega_{\R}(r)\theta_{y,r} + \rho \nonumber
            \\
        \text{s.t.} \quad 
        & \eqref{eq:TNEP:monotonicity_lines_storage}, \quad y \in \Y \\
        \label{eq:TNEP:no_load_shed}
        & \rho = 0 \\
        \label{eq:TNEP:accumulated_cuts}
        & \textstyle\bigcup_{y \in \Y, r\in \R}\mathcal{C}_{y,r}^k 
    \end{align}
\end{model}

\begin{model}[!t]
    \caption{Benders Subproblem (SP)}
    \label{model:TNEP_ptdf_subproblem}
    \begin{align}
        \text{min} \quad 
            & c^{\text{op}}_{y,r}(\gamma^k_y, \sigma^k_y) \nonumber \\
        \text{s.t.} \quad 
        & \eqref{eq:TNEP:ptdf_flow}-\eqref{eq:TNEP:global_power_balance} , \quad \forall t \in \T \nonumber \\
        & \eqref{eq:TNEP:thermal_limit},
            \quad \forall ij \in \E, t \in \T \nonumber \\ 
        & \eqref{eq:TNEP:generation:min_max_limits},
            \quad \forall i \in \N, t \in \T \nonumber \\
        & \eqref{eq:TNEP:state_of_charge},
            \quad \forall i \in \N, t \in \{2,...,T\} \nonumber \\
        & \eqref{eq:TNEP:initial_charge}-\eqref{eq:TNEP:final_charge},
            \quad \forall i \in \N \nonumber \\
        & \eqref{eq:TNEP:soc_energy_rating} - \eqref{eq:TNEP:discharge_limits},
        \quad \forall i \in \N, t \in \T \nonumber \\
        & \gamma_{ij} = \gamma^k_{ij} \quad (\pi^{y,r,k}_\gamma), \quad \forall ij \in \E  \label{eq:TNEP:fixed_line} \\
        & \sigma_i = \sigma^k_i \quad (\pi^{y,r,k}_\sigma), \quad \forall i \in \N  \label{eq:TNEP:fixed_stor}
    \end{align}
\end{model}

\subsection{Benders Subproblem (SP)}
After the $k$th iteration of an MP solve, the investment configuration solution $x^k$ is broadcast to all the SPs. Constraints \eqref{eq:TNEP:fixed_line}-\eqref{eq:TNEP:fixed_stor} enforce the investments to match the MP solution at period $y$. Their dual variables, $\pi^{y,r,k}_\gamma$ and $\pi^{y,r,k}_\sigma$, respectively, are used to construct the Benders cuts. 

Each SP is solved using a two-phase procedure. In Phase I, load feasibility is enforced by minimizing total load shedding, $\sum_{i \in \N, t \in \T} \xi_{i,y,r,t}$. If nonzero load shedding is detected, the SP terminates and generates a \emph{feasibility cut} for the load-shed underestimator $\rho$. Otherwise, Phase II minimizes the operational dispatch cost (Model \ref{model:TNEP_ptdf_subproblem}), yielding an \emph{optimality cut} for the operational cost underestimator $\theta_{y,r}$. The feasibility and optimality cuts are, respectively:
\begin{align}
    \label{eq:TNEP:feasibility_cut}
    \rho & \geq \textstyle{\sum_{i \in \N, t\in \T}}\;\xi_{i,y,r,t} + (\pi^{y,r,k})^\top(x_{y,r} - x_{y,r}^k) \\
    \label{eq:TNEP:optimality_cut}
    \theta_{y,r} & \geq c^{\text{op}}_{y,r}(x^k_{y,r}) + (\pi^{y,r,k})^\top(x_{y,r} - x_{y,r}^k)
\end{align}

To further accelerate SP solves, constraints \eqref{eq:TNEP:ptdf_flow} and \eqref{eq:TNEP:thermal_limit} are imposed via lazy constraint generation under the PTDF formulation, yielding an equivalent but more tractable SP. Details of implementation are shown in Appendix \ref{app:lazy_constraint_generation}.

\subsection{Per-period Stabilization}
Baseline Benders implementations are susceptible to oscillatory MP iterates and dual degeneracy when multiple transmission–storage configurations yield similar operating costs. To mitigate these challenges in the multiperiod setting, the MP is stabilized using $l_1$ trust region constraints \textbf{applied in a per-period basis}. The investments are stabilized about an anchor point $(\breve\gamma, \breve\sigma)$ that corresponds to the incumbent investment trajectory achieving the best known upper bound. The trust region constraints are:
\begin{align}
    \label{eq:TNEP:line_trust_region}
    \textstyle\sum_{ij\in\mathcal{E}} \big|\gamma_{ij, y}-\breve\gamma_{ij, y}\big| & \leq \mathrm{rad}_\gamma, \quad \forall y \in \Y \\
    \label{eq:TNEP:storage_trust_region}
    \textstyle\sum_{i\in\mathcal{N}} \big|\sigma_{i, y}-\breve\sigma_{i, y}\big| & \leq \mathrm{rad}_\sigma, \quad \forall y \in \Y,
\end{align}
where $\rad_\gamma$ and $\rad_\sigma$ denote the transmission and storage TR radii, respectively. The specific parameter values are discussed in Section \ref{subsec:hyperparameters}.

In addition to improving convergence, the TR stabilization accelerates SP solves by restricting the search to nearby investment configurations, leading to more consistent congestion patterns across iterations. This, in turn, enhances the effectiveness of the lazy constraint generation by reducing the number of newly violated flow constraints.

The MP is further strengthened using provably valid inequalities that impose per-period minimum required storage capacity to ensure load feasibility, thereby improving convergence performance; their derivation and proofs are provided in Appendix \ref{app:valid_storage_cap_lb}.

\subsection{Benders Warm-start (WS)}
\label{subsec:benders_warmstart}
To initialize the MP away from trivial zero-investment solutions with extreme load shedding, a relaxation-based warm-start (WS) procedure is employed. The WS is constructed by solving continuous relaxations of Model \ref{model:TNEP_ptdf} sequentially across planning periods, enforcing monotonic investment decisions between periods. For each period $y\in\Y$, the continuous relaxation is solved independently for all representative days $r\in\R$, yielding relaxed investments $(\gamma^{y,r}, \sigma^{y,r})$. These are aggregated by taking the component-wise maximum across representative days and rounding up to the admissible investment granularity:
\begin{align*}
    \bar\gamma_{y} := \textstyle\big\lceil \max_{r\in\R} \gamma^{y,r} \big\rceil, \quad
    \bar\sigma_{y} := \textstyle\big\lceil \max_{r\in\R} \sigma^{y,r} \big\rceil .
\end{align*}
The aggregated investments are enforced in subsequent periods via monotonicity constraints,
\begin{align*}
\gamma^{y,r} \ge \bar\gamma_{y-1}, \qquad \sigma^{y,r} \ge \bar\sigma_{y-1}, \quad \forall r\in\R.
\end{align*}

This procedure yields a warm-started investment trajectory consistent with intertemporal monotonicity \eqref{eq:TNEP:monotonicity_lines_storage} and worst-case operational requirements across representative days. The BD is initialized at $(\bar\gamma,\bar\sigma)$.

To further reduce degeneracy, investment decisions are restricted to candidate sets inferred from the relaxed solutions in the final planning period:
\begin{align*}
    \mathcal{E}_c &:= \{\, ij\in\mathcal{E} \mid \exists r\in\R:\ \gamma^{Y,r}_{ij,Y} > 0 \}, \\
    \mathcal{N}_c &:= \{\, i\in\mathcal{N} \mid \exists r\in\R:\ \sigma^{Y,r}_{i,Y} > 0 \}.
\end{align*}

\subsection{Convergence of TR-Benders}
In baseline BD, the MP optimum $z_k^\text{MP}$ provides a valid global lower bound ($\text{LB}$), representing a theoretical minimum system cost over all feasible investment configurations. With the introduction of TR constraints, the MP is restricted to a local neighborhood, rendering $z_k^\text{MP}$ only a \textit{local} LB. A valid global lower bound $\text{LB}^*$ is therefore constructed separately; its construction is presented in Appendix \ref{subsec:global_lower_bound_lb}. 

Whenever all SPs are solved and no load shedding is observed (i.e. $\sum_{i \in \N, t\in \T} \xi_{i,y,r,t} = 0 \; \forall y\in\Y, r\in\R$), the resulting solution is feasible and yields objective $\hat z^k := c^{\mathrm{cap}}(x^k) + c^{\text{op}}(x^k)$. The upper bound (UB) is then updated as $\mathrm{UB} := \min\{\mathrm{UB}, \hat z_k\}$. 

During TR-Benders, the current UB is also used to enforce a strict level-set constraint that ensures the MP explores strictly better investment configurations.
\begin{align}
    \label{eq:TNEP:level_set}
    z_k^\MP < \UB
\end{align}
In practice, the level-set constraint is relaxed to $z_k^\MP \leq \UB - \delta$, where $\delta > 0$ is a small numerical tolerance.

The algorithm terminates once the relative optimality gap satisfies $(\mathrm{UB} - \mathrm{LB}^*)/\mathrm{UB} \leq \epsilon$ for a specified tolerance $\epsilon > 0$. \\

\section{Case Study}
\label{sec:casestudy}
The proposed methodology is evaluated on the ACTIVSg2000 test system, a large-scale 2000-bus representation of the Texas grid developed using publicly available data and statistical representations of real-world power systems \cite{synthetic_1}. Investment planning periods are considered every 5 years, $\Y = \{2030, 2035, 2040, 2045\}$, with an annual interest rate of $\alpha = 0.05$. For each planning period, system operations are evaluated at the end of the period after investment decisions have been implemented; for example, operations for period $y=1$ (2030–2034) are evaluated in year 2030.

\subsection{Network Data}
The ACTIVSg2000 test system has 2000 buses, 3206 transmission lines, 423 conventional generators, and 175 renewable generators. A PTDF cutoff of 0.005 is applied to sparsify the network matrix, a common technique used to remove negligible sensitivities and improve computational tractability \cite{rahmani2016comprehensive}. Generation capacity projections are based on the EIA Annual Energy Outlook \cite{EIA2023Outlook}, with all resources scaled uniformly to account for expected capacity additions and retirements. A 1.5\% annual load growth is applied uniformly across all buses. Detailed generation, load, and discounting projections for the case study are provided in Appendix~\ref{app:case_study_parameters}. 

A set of 18 representative days ($R = 18$), each spanning 24 hours, is selected from a full year of load, wind, and solar data. The representative days (listed in Appendix \ref{app:representative_days}) are chosen using a max–min diversity criterion applied to normalized daily profiles, to identify a diverse set of operating conditions, and are assigned equal probabilities.

\subsection{Investment Parameters}
Line upgrades via reconductoring are assumed to double thermal limits \cite{chojkiewicz2024accelerating_reconductoring_pnas}, giving $\Delta^c_{ij} = \pfmax_{ij}$. Storage investments are restricted to discrete units with an energy rating of 250 MWh each, with up to 12 units permitted per node (3 GWh total). A charge and discharge efficiency of $\eta = 95\%$ is assumed, yielding a round-trip efficiency of $90.25\%$. Investment cost assumptions for the case study are summarized in Appendix~\ref{app:case_study_parameters}.
Reconductoring costs are adapted from \cite{chojkiewicz2024accelerating_reconductoring_pnas}, and storage costs are adapted from \cite{NRELBatteryCost}. Annualized capital expenditures are computed by applying a capital recovery factor (CRF) to the overnight investment cost, based on the interest rate $\alpha$ and asset lifetime $l$, as $\text{CRF} := \frac{\alpha(1+\alpha)^l}{(1 + \alpha)^l - 1} $. Battery storage is assumed to have a 15-year lifetime, consistent with the median value reported in published battery storage cost projections \cite{NRELBatteryCost}. Although reconductored transmission lines typically exhibit lifetimes of 40–60 years \cite{accc_longevity}, the case study is restricted to the 2030–2050 planning horizon; therefore, an effective lifetime of 20 years is assumed for line upgrades.

\subsection{Comparison with Rolling Horizon (RH)}
One common approach to multiperiod planning is the use of a rolling horizon (RH) framework, in which single-period models are solved sequentially. To compare against the AY formulation, the case study is also solved using RH. Under RH, investment decisions are determined period by period and are myopic, as future-period projections are not anticipated. For each planning period, a separate instance of the TR-Benders algorithm is solved. The WS procedure in Sec.~\ref{subsec:benders_warmstart} is modified such that the investments enforced in period $y$ are taken from the optimal single-period TR-Benders solution in period $y-1$, causing RH to alternate between WS and TR-Benders across periods. In contrast, the AY methodology performs a single WS phase to initialize a joint multiperiod investment trajectory, followed by a single execution of TR-Benders that jointly optimizes investments across all periods.

\subsection{Computational Setup and Hyperparameters}
\label{subsec:hyperparameters}
All computational experiments are conducted using a Julia implementation based on JuMP v1.23.2 \cite{Lubin2023_jump}, with optimization performed using Gurobi v1.3.1 \cite{gurobi} on the Georgia Tech PACE Phoenix cluster \cite{PACE} (24-core Intel Xeon nodes, Linux). Each worker is allocated 12GB of RAM, consistent with the 288GB memory available per node. 

The TR hyperparameters were selected based on preliminary computational experiments. The storage TR radius was fixed at $\mathrm{rad}_\sigma=1$ due to the greater degeneracy observed in storage investment decisions, while the transmission TR radius was initialized at $\mathrm{rad}_\gamma=1$ and updated dynamically (Appendix \ref{app:algorithm}) to balance stabilization and exploration of promising transmission expansion plans. For all WS procedures, the load-shed penalty was set to $\lambda = 50{,}000\$/\text{MWh}$, which yielded zero load-shed solutions across all tested instances.

The RH experiments are run with sequential time limits of 24/24/24/72 hours, while the AY formulation is solved with a time limit of 144 hours.

\section{Results}
\label{sec:results}

\subsection{All-Years Multiperiod (AY) Computational Performance}
The AY warm-start procedure, denoted ``AY-WS'' required 4.2 hours of runtime and produced an objective value of \$29.14B with an optimality gap of 3.63\%. The AY TR-Benders algorithm was then executed with a 144-hour time limit, resulting in a total runtime of 148.2 hours. The resulting investment plan, denoted ``AY-Final'' achieved a total cost of \$28.25B with a final optimality gap of 0.59\%. Figure \ref{fig:gap_over_time} plots the optimality gap across total runtime of the AY method, which includes WS runtime. AY attains a solution quality of under 1\% optimality gap within 64 hours, illustrating the method's ability to solve high-resolution planning instances.

Figure~\ref{fig:master_subproblem_solve_times} compares the MP and SP solve times over the 338 AY TR-Benders iterations for the representative day of August 11, 2045. The reported SP solve times include both the Phase I (load-shed minimization) and, when applicable, the subsequent Phase II (operational cost minimization) solves. This day corresponds to the highest system load in the year and occurs in the final planning period (2045), which also has the highest penetration of renewable generation. These operating conditions induce severe congestion across the network and require substantial storage utilization to satisfy demand, making this day the most computationally challenging among all representative days and planning periods. Consequently, it produces the longest SP solve times and determines the duration of each TR-Benders iteration, since progress cannot occur until all SPs are evaluated.

On average, the MP solves in 0.43 minutes, whereas the SPs require an average of 25.1 minutes. The MP solves account for only 1.7\% of the TR-Benders runtime. This highlights the relatively low computational complexity of the MP, typical to grid planning problems \cite{goke2024stabilized,wu2025high}, and is further mitigated by trust-region stabilization. Although extending single-period BD to the multiperiod setting scales the MP size by $Y$, the TR constraints restrict the search to a localized region of the solution space. The resulting MP remains computationally lightweight even under a multicut implementation, where each iteration introduces up to $Y\times R$ cuts. These results demonstrate that the proposed AY methodology scales well with respect to the MP: increasing the number of representative days $R$ is largely inconsequential due to parallelizable SP solves, while scaling the number of planning periods $Y$ remains tractable given the consistently fast MP solve times. Moreover, this efficiency enables the MP to accommodate additional modeling features, such as budget constraints, decarbonization targets, and emissions policies.

\subsection{Comparison with Rolling Horizon (RH) Formulation}
\label{subsec:comparison_rh_rolling_horizon}
Table \ref{tab:obj_optimality_gaps} compares the computational performance and achieved optimality gaps of the AY methodology with the RH approach. ``RH-Final'' refers to the final investment plan produced by RH after all periods have been executed sequentially. Reported runtimes represent total wall-clock time expended, including warmstart. Final solves were terminated upon reaching prescribed time limits; thus, reported runtimes reflect imposed limits rather than time to proven optimality.

\begin{table}[t]
    \caption{Computational Performance and Optimality Gaps}
    \label{tab:obj_optimality_gaps}
    \centering
    \renewcommand{\arraystretch}{1.2}
    \begin{tabular}{|c|c|c|c|}
    \hline
    \textbf{Method} 
        & \textbf{\parbox[t]{1.4cm}{\centering Runtime (h)\\[0.4ex]}} 
        & \textbf{\parbox[t]{1.2cm}{\centering Best UB\\(\$B)}} 
        & \textbf{\parbox[t]{1.2cm}{\centering Gap\\(\%)}}  \\
    \hline
    RH-Final
        & 147.7
        & 28.31
        & 0.81 \\
    AY-WS 
        & 4.2
        & 29.14
        & 3.63 \\
    AY-Final
        & 148.2
        & \textbf{28.25}
        & \textbf{0.59} \\
    \hline
    \end{tabular}
\end{table}

RH-Final attains a high-quality solution with an optimality gap of 0.81\% in 147.7 hours. In comparison, AY achieves comparable solution quality in less than half the runtime, reaching an optimality gap below 0.81\% within 72 hours (Fig.~\ref{fig:gap_over_time}). This reduction in solution time reflects the increased parallelism of the AY formulation compared to RH.

The optimality gap obtained by RH-Final and AY-Final are, however, comparable. This outcome is likely influenced by the structure of the case study: although generation types and load are geographically heterogeneous, future-year load and generation projections are applied uniformly across the network. Consequently, the case study does not capture spatially and temporally heterogeneous shifts (e.g. a large computing load emerging at one node in 2030 followed by the introduction of a new generator at another node in 2035) which may mitigate the otherwise myopic limitations of the RH approach by reinforcing consistent intertemporal trends.

\begin{figure}
    \centering
    \includegraphics[width=0.9\linewidth, trim={4 10 4 7}, clip]{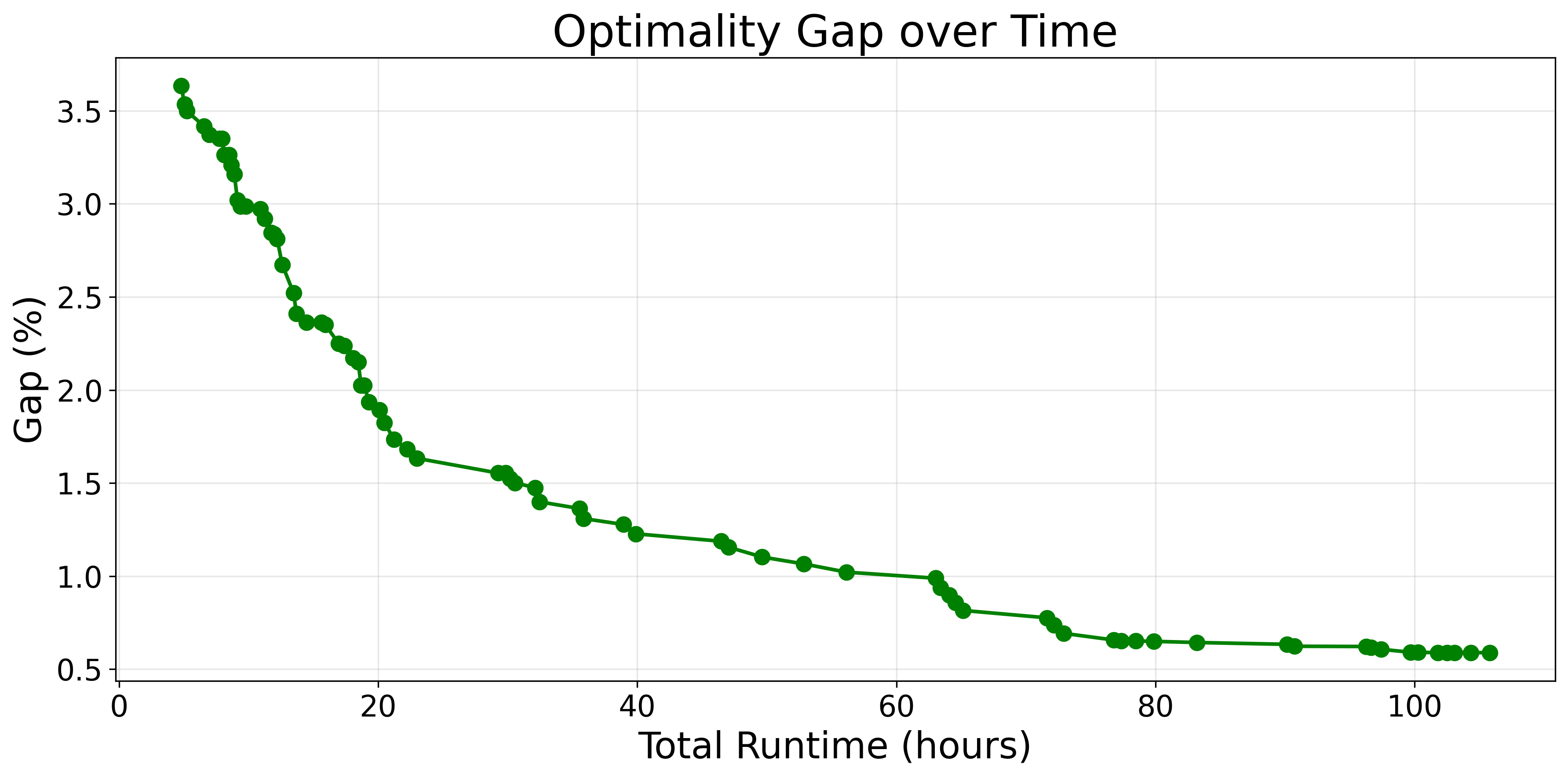}
    \caption{AY TR-Benders optimality gap over total runtime (including WS)}
    \label{fig:gap_over_time}
\end{figure}

\begin{figure}
    \centering
    \includegraphics[width=0.75\linewidth, trim={4 10 4 7}, clip]{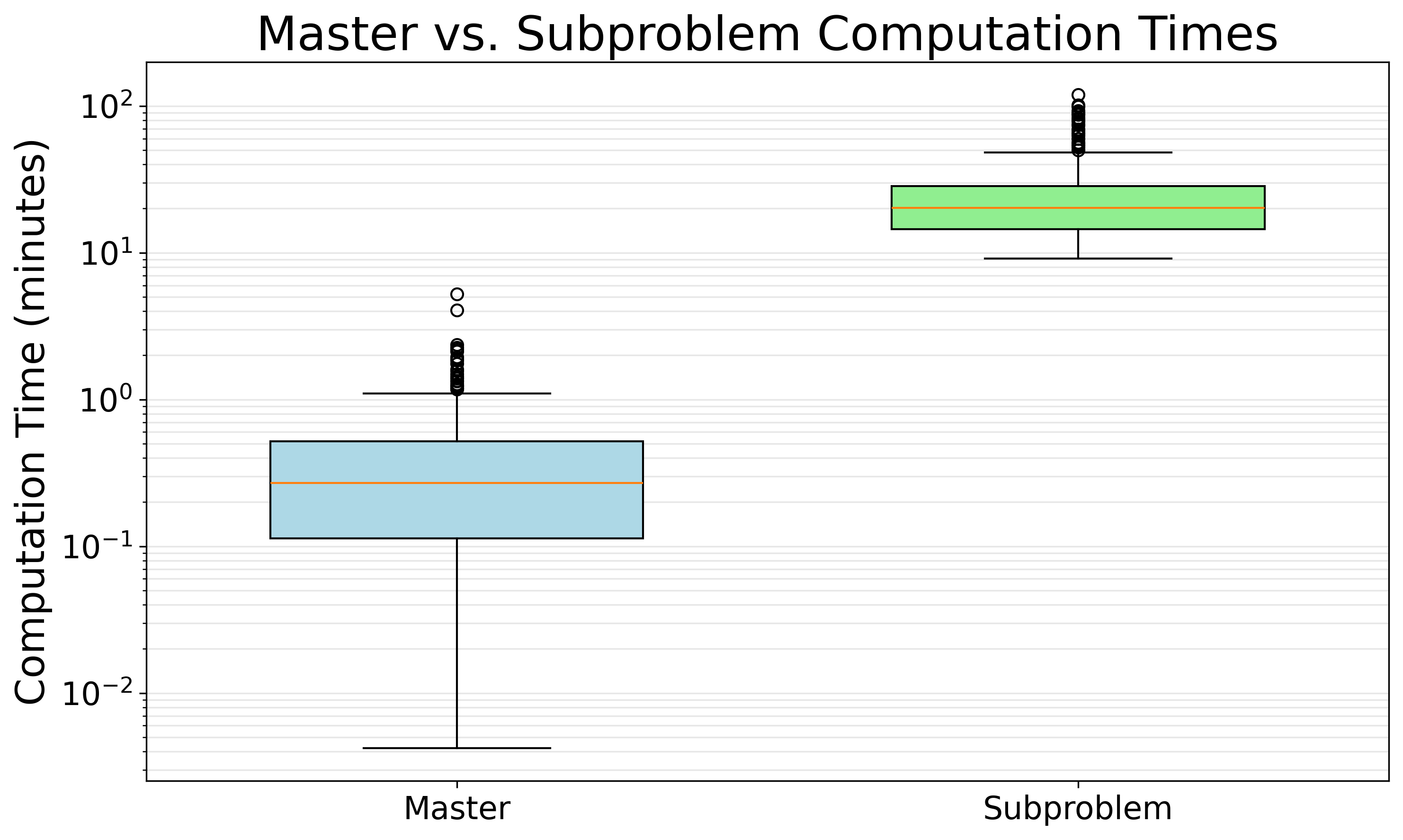}
    \caption{AY TR-Benders MP and SP computation times}
    \label{fig:master_subproblem_solve_times}
\end{figure}

\subsection{Investment Result Comparison}
\label{subsec:rolling_horizon_results}
Table~\ref{tab:first_stage_investments} compares cumulative first-stage investment decisions under RH-Final and AY-Final, while Table~\ref{tab:first_stage_investments_costs} reports the corresponding solution costs. Incremental first-stage investments are reported in undiscounted terms to reflect physical build decisions, whereas total period costs are discounted and include operational costs.

Relative to the RH, the AY invests more cumulatively in transmission reconductoring and less in storage capacity. This shift results in a lower final objective value and optimality gap, reflecting the higher unit cost of storage relative to reconductoring (Appendix \ref{app:case_study_parameters}). Moreover, incremental investment patterns show that AY front-loads reconductoring investments more aggressively than RH, reducing the need for storage investments in later periods. This behavior is consistent with the multiperiod perspective of the AY, under which earlier transmission upgrades can be economically justified in anticipation of increasing future system requirements.

Notably, both RH-Final and AY-Final incur almost identical total period costs in 2030 despite differing line configurations, highlighting the degeneracy of investment configurations with similar costs. Overall, these results indicate that the multiperiod formulation better internalizes future system needs and consequently favors earlier transmission investments.

\begin{table}[t]
    \caption{Cumulative First-Stage Investments}
    \label{tab:first_stage_investments}
    \centering
    \renewcommand{\arraystretch}{1.2}
    \begin{tabular}{|c|c|c|c|c|c|c|}
    \hline
    \textbf{Method}
    & \textbf{Quantity} 
    & \textbf{2030} 
    & \textbf{2035} 
    & \textbf{2040} 
    & \textbf{2045} \\
    \hline
    \textbf{RH-Final} & \# Lines
        & $214$ 
        & $333$ 
        & $453$ 
        & $527$ \\
    & \# Storage Units 
        & $0$ 
        & $26$ 
        & $83$ 
        & $148$ \\
    & \parbox[t]{2.0cm}{\centering Storage Capacity\\(GWh)\\[0.4ex]} 
        & $0$ 
        & $28.00$ 
        & $96.75$ 
        & $167.50$ \\
    \hline
    \textbf{AY-Final} & \# Lines
        & $299$ 
        & $398$ 
        & $502$ 
        & $568$ \\
        & \# Storage Units 
        & $0$ 
        & $32$ 
        & $77$ 
        & $138$ \\
    & \parbox[t]{2.0cm}{\centering Storage Capacity\\(GWh)\\[0.4ex]} 
        & $0$ 
        & $27.75$ 
        & $96.50$ 
        & $165.75$ \\
    \hline
    \end{tabular}
\end{table}

\begin{table}[t]
    \caption{Incremental First-Stage Investment and Total Period Cost (\$)}
    \label{tab:first_stage_investments_costs}
    \centering
    \renewcommand{\arraystretch}{1.2}
    \begin{tabular}{|c|c|c|c|c|c|c|}
    \hline
    \textbf{Method}
    & \textbf{Costs} 
    & \textbf{2030} 
    & \textbf{2035} 
    & \textbf{2040} 
    & \textbf{2045} \\
    \hline
    \textbf{RH-Final} & Incr. Line
        & $67.3$M 
        & $59.8$M 
        & $27.4$M 
        & $31.5$M \\
    & Incr. Storage 
        & $0$ 
        & $2.70$B 
        & $6.62$B 
        & $6.81$B \\
    & \parbox[t]{2.0cm}{\centering Total Period \\(discounted)\\[0.4ex]} 
        & $5.83$B 
        & $6.97$B 
        & $8.36$B 
        & $7.15$B \\
    \hline
    \textbf{AY-Final} & Incr. Line
        & $91.9$M 
        & $71.4$M 
        & $65.3$M 
        & $25.1$M \\
        & Incr. Storage
        & $0$ 
        & $2.67$B 
        & $6.62$B 
        & $6.67$B \\
    & \parbox[t]{2.0cm}{\centering Total Period\\(discounted)\\[0.4ex]} 
        & $5.84$B 
        & $6.96$B 
        & $8.37$B 
        & $7.07$B \\
    \hline
    \end{tabular}
\end{table}

\section{Conclusion}
\label{sec:conclusion}

This work proposes an all-years (AY) multiperiod formulation for the TEP+Storage problem. The formulation is solved using a trust-region–stabilized Benders decomposition. New per-period trust-region constraints are introduced, complemented by a relaxation-based warm-start procedure. The proposed method is evaluated on a 2,000-bus synthetic Texas system and achieves a final optimality gap of 0.59\%. Compared with a rolling-horizon (RH) approach, the AY approach attains comparable or improved solution quality with substantially reduced computational effort, benefiting from joint optimization across periods and increased parallelism. The consistently low computational burden of the master problem further indicates favorable scalability to larger planning horizons and model extensions.

The AY formulation also favors earlier transmission reconductoring, leading to lower cumulative investment in costly storage relative to RH and illustrating the advantages of joint multiperiod optimization. Future work will investigate enhanced stabilization strategies, such as one-sided trust-region constraints informed by conservative warm-starts, as well as more heterogeneous future projection scenarios, temporal linking of storage state-of-charge across representative days, and additional parallel procedures for generating valid inequalities.

\section{AI Usage Disclosure}
AI tools were used to assist with code implementation and debugging, as well as for editorial support to improve clarity and writing style. AI tools were not used to generate research results, derive models, produce figures, or create original scientific content.

\appendix 
\subsection{Multiperiod All-Years TR-Benders Algorithm} \label{app:algorithm}

Algorithm \ref{alg:tr_benders} describes the implementation of the proposed AY TR-Benders algorithm.

\begin{algorithm}[t]
\caption{AY TR-Benders}
\label{alg:tr_benders}
\begin{algorithmic}[1]
\Require $\Y,\R$; $\epsilon$; $\LB^*$; $(\bar\gamma,\bar\sigma)$; $(\E_c,\N_c)$
\State Initialize MP with $(\E_c,\N_c)$
\State $k\gets 0$; $\UB\gets +\infty$; $(\breve\gamma,\breve\sigma)\gets(\bar\gamma,\bar\sigma)$
\State $\rad_\sigma\gets 1$; $\rad_\gamma\gets 1$
\While{\textnormal{elapsed time} $< \mathrm{TL}_{\max}$}
    \State $k\gets k+1$
    \Statex \textbf{Master Problem (MP)}
    \While{MP is infeasible}
        \State $\rad_\gamma \gets \rad_\gamma + 1$
        \State Re-solve MP
    \EndWhile
    \State $\rightarrow (\gamma^k,\sigma^k)$
    \Statex \textbf{Subproblems (SPs) and Cut Generation}
    \State $\textsc{NoShed}\gets \textsc{true}$
    \ForAll{$(y,r)\in\Y\times\R$ \textbf{(in parallel)}}
        \State Fix $(\gamma_y,\sigma_y)\gets(\gamma_y^k,\sigma_y^k)$
        \State Solve Phase-I SP $\rightarrow \pi^{y,r,k}$, $\sum_{i,t}\xi_{i,y,r,t}$ 
        \If{$\sum_{i,t}\xi_{i,y,r,t} > 0$}
            \State $\textsc{NoShed}\gets \textsc{false}$; add feasibility cut \eqref{eq:TNEP:feasibility_cut}
        \Else
            \State Solve Phase-II SP $\rightarrow \pi^{y,r,k}$, $c^{\op}_{y,r}(\gamma_y^k,\sigma_y^k)$ 
            \State Add optimality cut \eqref{eq:TNEP:optimality_cut}
        \EndIf
    \EndFor

    \Statex \textbf{Bounds / TR / Level Set Update}
    \If{\textsc{NoShed}}
        \State $\hat z^k \gets c^{\mathrm{cap}}(\gamma^k,\sigma^k) + c^{\op}(\gamma^k,\sigma^k)$
        \If{$\hat z^k < \UB$}
            \State $\UB \gets \hat z^k$; $(\breve\gamma,\breve\sigma)\gets(\gamma^k,\sigma^k)$
            \State $\rad_\gamma \gets 1$
            \State Update level set \eqref{eq:TNEP:level_set}
        \EndIf
    \EndIf
    \If{$(\UB-\LB^*)/\UB \le \epsilon$}
        \State \textbf{break}
    \EndIf
\EndWhile
\State \Return $(\breve\gamma,\breve\sigma)$ with objective $\UB$
\end{algorithmic}
\end{algorithm}

\subsection{PTDF Lazy Constraint Generation}
\label{app:lazy_constraint_generation}

Algorithm \ref{alg:ptdf_lazy_constraints} outlines the PTDF lazy constraint generation used in SP solves for both Phase I and Phase II. Generated constraints are stored and persist across TR-Benders iterations.

\begin{algorithm}[h]
\caption{PTDF Lazy Constraint Generation}
\label{alg:ptdf_lazy_constraints}
\begin{algorithmic}[1]
\Require $y, r$; batch size $k_b$; fixed investments $(\gamma_y,\sigma_y)$
\State Initialize SP model with $(\gamma_y,\sigma_y)$ fixed
\Repeat
    \State Solve SP model
    \State Compute PTDF-implied flows:
    \[
        [\pf_{ij,y,r,t}]^{E}_{ij=1} \gets \Phi \times \text{inj}^{y,r,t}
    \]
    \State Compute violation magnitudes:
    \[
        v_{ij,t} \gets |\pf_{ij,y,r,t}| - |\pfmax_{ij} + \gamma_{ij,y}\Delta^c_{ij}|
    \]
    \State $\mathcal{V} \gets \{(ij,t): v_{ij,t} > 0\}$
    \If{$|\mathcal{V}| > 0$}
        \State Sort $\mathcal{V}$ by $v_{ij,t}$ (descending)
        \State $\mathcal{B} \gets$ the $k_b$ largest elements of $\mathcal{V}$
        \ForAll{$(ij,t)\in \mathcal{B}$}
            \State Add PTDF flow constraint \eqref{eq:TNEP:ptdf_flow} for $(ij,t)$:
            \[
                \pf_{ij,y,r,t} = \textstyle\sum_{k\in\N}\Phi_{ij,k}\,\text{inj}^{y,r,t}_k
            \]
            \State Add thermal limit constraint \eqref{eq:TNEP:thermal_limit} for $(ij,t)$:
            \[
                -\pfmax_{ij} - \gamma_{ij,y} \Delta^c_{ij} \leq \pf_{ij,y,r,t} \leq \pfmax_{ij} +\gamma_{ij,y} \Delta^c_{ij}
            \]
        \EndFor
    \EndIf
\Until{$|\mathcal{V}| = 0$}
\State \Return Final SP solution (and objective value)
\end{algorithmic}
\end{algorithm}

\subsection{Valid Inequalities for Minimum Required Storage Capacity}
\label{app:valid_storage_cap_lb}
For each planning period $y\in\Y$ and representative day $r\in\R$, let $s^{\mathrm{feas}}_{y,r}$ denote the minimum total storage capacity obtained by solving a feasibility-only relaxation that enforces zero load shedding, neglects operating costs, and fixes all transmission upgrades at their maximum values (i.e., $\gamma_y = \gamma^{\max}_y$).

\begin{theorem}
For any period $y\in\Y$, any investment plan that achieves zero load shedding for all $r\in\R$ must satisfy $ \sum_i \sigma_{i,y} \geq \max_{r\in\R} s^{\mathrm{feas}}_{y,r}$
\end{theorem}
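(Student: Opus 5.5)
The plan is a relaxation argument carried out separately for each $(y,r)$. Then I take the maximum over $r$. Fix a period $y$ and a representative day $r$. Interpret $s^{\mathrm{feas}}_{y,r}$ as the optimal value of the program
\begin{align*}
s^{\mathrm{feas}}_{y,r} = \min \ \textstyle\sum_i \sigma_{i,y} \quad \text{s.t.} \quad & \eqref{eq:TNEP:ptdf_flow}\text{--}\eqref{eq:TNEP:discharge_limits} \text{ for } (y,r,t),\ t\in\T, \\
& \xi_{i,y,r,t}=0,\ \gamma_y=\gamma^{\max}_y,
\end{align*}
with $\sigma_y$ free, continuous or integer, within its admissible bounds. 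Now let $x=(\gamma,\sigma)$ be any investment plan that achieves zero load shedding on every day. Then for day $r$ there exist operational variables $(\pg,\pf,\mathrm{soc},\mathrm{ch},\mathrm{dis})$ satisfying \eqref{eq:TNEP:ptdf_flow}--\eqref{eq:TNEP:discharge_limits} with $\xi=0$ under the investments $(\gamma_y,\sigma_y)$.

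The key step is to show that $(\sigma_y, \text{same operational variables})$ is feasible for the relaxation above, in which $\gamma_y$ is replaced by $\gamma^{\max}_y$. Only the thermal limit \eqref{eq:TNEP:thermal_limit} involves $\gamma$. Because $\Delta^c_{ij}\ge 0$ and $\gamma_{ij,y}\le\gamma^{\max}_{ij,y}$, we have $\pfmax_{ij}+\gamma_{ij,y}\Delta^c_{ij}\le \pfmax_{ij}+\gamma^{\max}_{ij,y}\Delta^c_{ij}$. The feasible flow interval therefore only enlarges, and any flow vector feasible under $\gamma_y$ remains feasible under $\gamma^{\max}_y$. No other constraint depends on $\gamma$. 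The PTDF matrix $\Phi$ is unchanged by reconductoring (the Remark in Section~\ref{sec:formulation}), so \eqref{eq:TNEP:ptdf_flow} is untouched. The storage constraints \eqref{eq:TNEP:state_of_charge}--\eqref{eq:TNEP:discharge_limits} hold verbatim with the same $\sigma_y$. Hence $\sigma_y$ is a feasible point of the relaxation, and $\sum_i\sigma_{i,y}\ge s^{\mathrm{feas}}_{y,r}$.

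This holds for every $r\in\R$ simultaneously, because the plan has a single $\sigma_y$ per period that must serve all days. Taking the maximum over $r$ gives the claim. If candidate-set restrictions $(\E_c,\N_c)$ or integrality are imposed in the master but not in the relaxation, the relaxation is still weaker, so the inequality still holds.

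The main obstacle is making precise what ``fixes transmission at maximum values'' means, and checking monotonicity in $\gamma$. I need $\gamma^{\max}_y$ to componentwise dominate every admissible $\gamma_y$; I would take it as the all-ones vector on eligible branches. I also need reconductoring to enter only through the thermal limits and not through $\Phi$. A second subtlety is that the relaxation's minimum is attained, or at least that the bound holds for the infimum. Both are routine, since the relaxation is a finite-dimensional LP or MIP with bounded $\sigma$.
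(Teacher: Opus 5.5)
Your proposal is correct and follows the same relaxation argument as the paper: for each fixed $r$, the period-$y$ part of any zero-load-shed plan is feasible for the relaxation defining $s^{\mathrm{feas}}_{y,r}$ (continuous storage, $\gamma_y=\gamma^{\max}_y$), so $\sum_i\sigma_{i,y}\ge s^{\mathrm{feas}}_{y,r}$, and taking the maximum over $r$ gives the claim. Your explicit check that raising $\gamma$ only loosens the thermal limits \eqref{eq:TNEP:thermal_limit}, while leaving $\Phi$ and the storage constraints unchanged, spells out what the paper summarizes as ``weakly enlarges the feasible set.''
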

\begin{proof}
    Fix $y\in\Y$. Consider any first-stage decision $(\gamma_y,\sigma_y)$ that yields zero load shedding for all $r\in\R$. Fix an arbitrary $r\in\R$. The quantity $s^{\mathrm{feas}}_{y,r}$ is obtained from a relaxation with (i) storage decisions relaxed to be continuous and (ii) transmission upgrades fixed at their maximum values, which weakly enlarges the feasible set relative to Model \ref{model:TNEP_ptdf} for day $r$. Hence, any zero-load-shed feasible plan must satisfy
    $\textstyle\sum_i \sigma_{i,y} \geq s^{\mathrm{feas}}_{y,r}$.
    Since this holds for every $r\in\R$, it follows that
    \begin{align}
    \label{eq:TNEP:valid_storage_cap_lb}
    \textstyle\sum_i \sigma_{i,y} \geq \max_{r\in\R} s^{\mathrm{feas}}_{y,r}.
    \end{align}
\end{proof}

Constraint \eqref{eq:TNEP:valid_storage_cap_lb} is imposed in the MP for each planning period $y\in\Y$ to further improve the performance of TR-Benders. The quantities $s^{\mathrm{feas}}_{y,r}$ are computed in parallel with the WS phase and then rounded to the nearest admissible discrete investment level to ensure feasibility with the discrete formulation. This yields values of $\{0.0, 27.75, 96.00, 165.00\}$ GWh for planning periods $\{2030, 2035, 2040, 2045\}$, respectively.

\subsection{Global Lower Bound for Optimality Assessment}
\label{subsec:global_lower_bound_lb}
To compute a valid global LB for the multiperiod expansion problem, costs are decomposed into per-period first-stage (investment) and second-stage (operational) components.

For each planning period $y$, a \emph{first-stage LB} is obtained by solving an investment minimization problem with continuous relaxations of the investment decisions that enforces zero load-shed, but does not consider operating costs. This problem is solved in parallel for each day $r \in \R$, yielding day-specific minimum investment costs. To ensure feasibility across all days, the maximum of these minimum investment costs across all $\R$ is taken, resulting in the minimum cumulative investment necessary to guarantee zero load-shed across all representative days in period $y$. The \emph{incremental first-stage LB} for each period is computed as the increase in the cumulative investment cost relative to the previous period.

A \emph{second-stage LB} is computed by minimizing operational costs assuming maximal feasible first-stage investments, yielding an optimistic but valid bound on operational expenditures.

Each period's total LB is formed by summing the incremental first-stage LB and the second-stage LB and applying the period discount factor. The discounted period-wise costs are summed to obtain a valid global LB for the multiperiod problem. The per-period decomposition of the LB is shown in Table \ref{tab:lower_bound_decomp}. The multiperiod problem then has $\boldsymbol{\LB^*} = \mathbf{\$28.08}\,\textbf{B}$. 

\begin{table}[h]
    \caption{Per-Period Decomposition of Valid Lower Bound (\$)}
    \label{tab:lower_bound_decomp}
    \centering
    \renewcommand{\arraystretch}{1.2}
    \begin{tabular}{|c|c|c|c|c|c|}
    \hline
    \textbf{Quantity} 
    & \textbf{2030 } 
    & \textbf{2035 } 
    & \textbf{2040 } 
    & \textbf{2045 } \\
    \hline
    Cumulative 1st-Stage LB 
        & $0.19$M 
        & $2.74$B 
        & $9.39$B 
        & $16.03$B \\
    Incremental 1st-Stage LB 
        & $0.19$M 
        & $2.74$B 
        & $6.65$B 
        & $6.65$B \\
    2nd-Stage LB 
        & $5.74$B 
        & $6.13$B 
        & $6.95$B 
        & $7.99$B \\
    \hline
    Discounted Period LB 
        & $5.74$B 
        & $6.95$B 
        & $8.35$B 
        & $7.04$B \\
    \hline
    \end{tabular}
\end{table}

\subsection{Case Study Parameters}
\label{app:case_study_parameters}

Table~\ref{tab:gen_load_scale} summarizes the projections of generation and load relative to the base year 2022, and the discount factors referenced to 2030. Table~\ref{tab:inv_cost_parameters} shows the investment cost configuration for transmission reconductoring and storage installation.

\begin{table}[h]
    \caption{Generation and Load Projections}
    \label{tab:gen_load_scale}
    \centering
    \renewcommand{\arraystretch}{1.2}
    \begin{tabular}{|c|c|c|c|c|c|}
    \hline
    \textbf{Type} & \textbf{\parbox[t]{1.8cm}{\centering Base Capacity\\(GW), 2022 \\[0.4ex]}}  &  \textbf{2030} & \textbf{2035} & \textbf{2040} & \textbf{2045} \\
    \hline
    Coal            & 14.5 & 0.82 & 0.82 & 0.82 & 0.82 \\
    Natural Gas     & 75.1 & 0.79 & 0.73 & 0.71 & 0.72 \\
    Nuclear         & 7.1  & 0.98 & 0.90 & 0.80 & 0.80 \\
    Solar           & 2.5  & 4.51 & 6.00 & 6.87 & 8.04 \\
    Wind            & 15.5 & 2.02 & 2.23 & 2.26 & 2.32 \\
    Load            & --   & 1.13 & 1.21 & 1.31 & 1.41 \\
    Discount Factor & --   & 1.00 & 0.78 & 0.61 & 0.48 \\
    \hline
    \end{tabular}
\end{table}

\begin{table}[h]
    \caption{Investment Cost Configuration}
    \label{tab:inv_cost_parameters}
    \centering
    \renewcommand{\arraystretch}{1.2}
    \begin{tabular}{|c|c|c|c|}
    \hline
    \textbf{Type} & \textbf{Overnight Cost} & \textbf{CRF} & \textbf{Annualized Cost} \\
    \hline
    Reconductoring  & \$813/MW-km       & 0.0802   & \$65/MW-km    \\
    Storage         & \$1,000,000/MWh   & 0.0963   & \$96,300/MWh  \\
    \hline
    \end{tabular}
\end{table}

\subsection{Representative days}
\label{app:representative_days}
The following 18 representative days are used in the case study, each with equal weight ($\frac{1}{18}$):
\begin{align*}
    \R = \{&
    \text{Jan 27}, \text{Feb 23}, \text{Mar 6}, \text{Mar 11}, \text{Mar 22}, \text{Mar 27}, \\
    &\text{Apr 3}, \text{Apr 22}, \text{May 10}, \text{May 19}, \text{Jun 21}, \text{Jul 11}, \\
    &\text{Aug 11}, \text{Sep 2}, \text{Sep 10}, \text{Nov 16}, \text{Dec 3}, \text{Dec 8}
    \}.
\end{align*}

\bibliographystyle{ieeetr}
\bibliography{references}

@article{ramirez2020robust_multiperiod_benders,
  title={A robust multistage approach to solve the generation and transmission expansion planning problem embedding renewable sources},
  author={Ramirez, Juan M and Hernandez, A and Marmolejo, JA},
  journal={Electric Power Systems Research},
  volume={186},
  pages={106396},
  year={2020},
  publisher={Elsevier}
}

@article{heuberger2017power_tech_cost_learning,
  title={Power capacity expansion planning considering endogenous technology cost learning},
  author={Heuberger, Clara F and Rubin, Edward S and Staffell, Iain and Shah, Nilay and Mac Dowell, Niall},
  journal={Applied Energy},
  volume={204},
  pages={831--845},
  year={2017},
  publisher={Elsevier}
}

@article{wu2025high,
  title={High-Resolution PTDF-Based Planning of Storage and Transmission Under High Renewables},
  author={Wu, Kevin and Haider, Rabab and Van Hentenryck, Pascal},
  journal={arXiv preprint arXiv:2510.14696},
  year={2025}
}

@article{levin2023energy_nature_review,
  title={Energy storage solutions to decarbonize electricity through enhanced capacity expansion modelling},
  author={Levin, Todd and Bistline, John and Sioshansi, Ramteen and Cole, Wesley J and Kwon, Jonghwan and Burger, Scott P and Crabtree, George W and Jenkins, Jesse D and O’Neil, Rebecca and Korp{\aa}s, Magnus and others},
  journal={Nature Energy},
  volume={8},
  number={11},
  pages={1199--1208},
  year={2023},
  publisher={Nature Publishing Group UK London}
}

@article{chojkiewicz2024accelerating_reconductoring_pnas,
  title={Accelerating transmission capacity expansion by using advanced conductors in existing right-of-way},
  author={Chojkiewicz, Emilia and Paliwal, Umed and Abhyankar, Nikit and Baker, Casey and O’Connell, Ric and Callaway, Duncan and Phadke, Amol},
  journal={Proceedings of the National Academy of Sciences},
  volume={121},
  number={40},
  pages={e2411207121},
  year={2024},
  publisher={National Academy of Sciences}
}

@techreport{Henriksen2013ConductorPerformance,
  author      = {Henriksen, L. and Leman, J. and Panigrahi, S.},
  title       = {Conductor Performance White Paper},
  institution = {CTC Global},
  number      = {Project No. 131411},
  year        = {2013},
  url         = {https://ctcglobal.com/wp-content/uploads/2023/05/ACCC_Conductor_Performance_by_Power_Engineers.pdf},
  note        = {[Online]. Available}
}

@misc{EIA2023Outlook,
  author       = {{U.S. Energy Information Administration}},
  title        = {{Annual Energy Outlook 2023}},
  year         = 2023,
  url          = {https://www.eia.gov/outlooks/aeo/},
  note         = {Accessed: 2024-11-23}
}

@article{synthetic_1,
  title={Grid structural characteristics as validation criteria for synthetic networks},
  author={Birchfield, Adam B and Xu, Ti and Gegner, Kathleen M and Shetye, Komal S and Overbye, Thomas J},
  journal={IEEE Transactions on power systems},
  volume={32},
  number={4},
  pages={3258--3265},
  year={2016},
  publisher={IEEE}
}

@techreport{NRELBatteryCost,
  title        = {Cost Projections for Utility-Scale Battery Storage: 2023 Update},
  author       = {Wesley Cole and Akash Karmakar},
  institution  = {National Renewable Energy Laboratory},
  year         = {2023},
  month        = {June},
  number       = {NREL/TP-6A40-85332},
  url          = {https://www.nrel.gov/docs/fy23osti/85332.pdf}
}

@misc{accc_longevity,
  author = {{CTC Global Corporation}},
  title = {{ACCC\textregistered\ Conductor Longevity – A Brief Overview}},
  year = {2024},
  month = jun,
  note = {Accessed: 2026-01-14},
  url = {https://ctcglobal.com/accc-conductor-longevity-a-brief-overview/}}

@misc{gurobi,
  author = {{Gurobi Optimization, LLC}},
  title = {{Gurobi Optimizer Reference Manual}},
  year = 2023,
  url = "https://www.gurobi.com"
}

@manual{PACE,
     title  = "{P}artnership for an {A}dvanced {C}omputing {E}nvironment ({PACE})",
     author = "PACE",
     url    = "http://www.pace.gatech.edu",
     year   = "2017"
}

@article{Lubin2023_jump,
    author = {Miles Lubin and Oscar Dowson and Joaquim {Dias Garcia} and Joey Huchette and Beno{\^i}t Legat and Juan Pablo Vielma},
    title = {{JuMP} 1.0: {R}ecent improvements to a modeling language for mathematical optimization},
    journal = {Mathematical Programming Computation},
    year = {2023},
    doi = {10.1007/s12532-023-00239-3}
}

@article{goke2024stabilized,
  title={Stabilized Benders decomposition for energy planning under climate uncertainty},
  author={G{\"o}ke, Leonard and Schmidt, Felix and Kendziorski, Mario},
  journal={European Journal of Operational Research},
  volume={316},
  number={1},
  pages={183--199},
  year={2024},
  publisher={Elsevier}
}

@article{zuluaga2024cats_progressive_hedging_ph,
  title={Parallel computing for power system climate resiliency: Solving a large-scale stochastic capacity expansion problem with mpi-sppy},
  author={Zuluaga, Tomas Valencia and Musselman, Amelia and Watson, Jean-Paul and Oren, Shmuel S},
  journal={Electric Power Systems Research},
  volume={235},
  pages={110720},
  year={2024},
  publisher={Elsevier}
}

@article{piansky2025optimizing_cats_wildfire_benders,
  title={Optimizing battery and line undergrounding investments for transmission systems under wildfire risk scenarios: A benders decomposition approach},
  author={Piansky, Ryan and Gupta, Rahul and Molzahn, Daniel K},
  journal={Sustainable Energy, Grids and Networks},
  pages={101838},
  year={2025},
  publisher={Elsevier}
}

@article{rahmani2016comprehensive,
  title={Comprehensive power transfer distribution factor model for large-scale transmission expansion planning},
  author={Rahmani, Mohsen and Kargarian, Amin and Hug, Gabriela},
  journal={IET Generation, Transmission \& Distribution},
  volume={10},
  number={12},
  pages={2981--2989},
  year={2016},
  publisher={Wiley Online Library}
}

@techreport{ReEDS,
  title={Regional Energy Deployment System (ReEDS) Model Documentation (Version 2020)},
  author={Ho, Jonathan and Becker, Jonathon and Brown, Maxwell and Brown, Patrick and Chernyakhovskiy, Ilya and Cohen, Stuart and Cole, Wesley and Corcoran, Sean and Eurek, Kelly and Frazier, Will and others},
  year={2021},
  institution={National Renewable Energy Lab.(NREL), Golden, CO (United States)}
}

@article{Jacobson,
  title={Quantifying the Impact of Energy System Model Resolution on Siting, Cost, Reliability, and Emissions},
  author={Jacobson, Anna F and Mauzerall, Denise L and Jenkins, Jesse D},
  journal={arXiv preprint arXiv:2406.16924},
  year={2024}
}

@article{XuHobbs,
  title={Value of model enhancements: quantifying the benefit of improved transmission planning models},
  author={Xu, Qingyu and Hobbs, Benjamin F},
  journal={IET Generation, Transmission \& Distribution},
  volume={13},
  number={13},
  pages={2836--2845},
  year={2019},
  publisher={Wiley Online Library}
}

@article{pecci2025regularized,
  title={Regularized benders decomposition for high performance capacity expansion models},
  author={Pecci, Filippo and Jenkins, Jesse D},
  journal={IEEE Transactions on Power Systems},
  volume={40},
  number={4},
  pages={3105--3116},
  year={2025},
  publisher={IEEE}
}
\balance

\endgroup
\end{document}